\documentclass[conference]{IEEEtran}
\IEEEoverridecommandlockouts
\usepackage{inputenc}
\usepackage{placeins}
\usepackage[american]{babel}
\usepackage{csquotes}
\usepackage[letterpaper,left=48pt,bottom=43pt,right=48pt,top=57pt]{geometry}

\usepackage{comment}
\usepackage[listings,light]{solarized}
\usepackage{tikz}
\usepackage{pgfmath}
\usetikzlibrary{babel}
\usetikzlibrary{intersections}
\usetikzlibrary{fadings}
\usetikzlibrary{calc}
\usetikzlibrary{math}
\usetikzlibrary{fit}
\usetikzlibrary{positioning}
\usetikzlibrary{patterns}
\usetikzlibrary{shapes, arrows, arrows.meta}
\usetikzlibrary{shapes.geometric, shapes.multipart}
\usetikzlibrary{decorations.pathreplacing}
\usetikzlibrary{decorations.pathmorphing}
\usetikzlibrary{decorations.text}
\usetikzlibrary{backgrounds}
\pgfdeclarelayer{background}
\pgfdeclarelayer{foreground}
\pgfsetlayers{background,main,foreground}%,foreforeground}
\usepackage{aircraftshapes}
\usepackage{url}
\usepackage{hyperref}
\hypersetup{colorlinks=false,linkcolor=blue,urlcolor=blue}
\usepackage{graphicx}
\usepackage{subcaption}
\usepackage{tikz}
\usetikzlibrary{arrows.meta,calc,decorations.markings}
\addto\captionsenglish{\renewcommand{\figurename}{Fig.}}
\addto\captionsamerican{\renewcommand{\figurename}{Fig.}}
\usepackage{multicol}
\usepackage{xspace}
\usepackage{mathtools}
\usepackage{booktabs}
\usepackage{multirow}
\usepackage{tabularx}
\newcolumntype{C}{>{\centering\arraybackslash}X}
\newcolumntype{L}{>{\raggedright\arraybackslash}X}
\usepackage{siunitx}
\usepackage{listings}
\usepackage{float}
\usepackage{bm}         % Negritas en modo matemático
\usepackage{amsmath,amssymb,amsfonts,amsthm}

\newtheorem{theorem}{Theorem}
\newtheorem{corollary}{Corollary}

\usepackage{bm}

\allowdisplaybreaks
\begin{document}

\title{Robust Geometric Control of Catenary Robots under Unstructured Force Uncertainties\\
\thanks{The authors acknowledge financial support from Grant PID2022-137909NB-C21 funded by MCIN/AEI/ 10.13039/501100011033. A.A.S. was partially supported by MICIU/AEI/10.13039/501100011033/ FEDER, UE, Grant No. PID2024-155187OB-I00. L.C was supported in part by iRoboCity2030-CM, Robótica Inteligente para Ciudades Sostenibles (TEC-2024/TEC-62), funded by the Programas de Actividades I+D en Tecnologías en la Comunidad de Madrid.}}

\author{\IEEEauthorblockN{Alexandre Anahory Sim\~oes}
\IEEEauthorblockA{\textit{IE School of Science and Technology,} \\
\textit{IE University}\\
Madrid, Spain \\
alexandre.anahory@ie.edu}
\and
\IEEEauthorblockN{Leonardo Colombo}
\IEEEauthorblockA{\textit{Centre for Automation and Robotics (CSIC-UPM)} \\
%\textit{name of organization (of Aff.)}\\
Madrid, Spain\\
leonardo.colombo@csic.es}
}

\maketitle
% ====================================================
% --- Added in draft --- ABSTRACT --------------------
\begin{abstract}
This paper considers the robust control of a catenary robot composed of
two quadrotors connected by an inextensible cable. The system is modeled
on \(SE(3)\), with the cable treated as a geometric subsystem induced by
the UAV configuration rather than as an independent dynamical element.
The catenary shape determines configuration-dependent forces that couple
the translational dynamics of the vehicles. We propose a geometric
tracking controller for the relative configuration of the agents and
analyze its robustness with respect to unstructured uncertainties in the
catenary-induced forces. The main theoretical result establishes local
input-to-state stability of the closed-loop tracking errors. In
particular, we obtain asymptotic convergence in the nominal case and an
explicit ultimate bound for the tracking errors under bounded
catenary-force perturbations.
\end{abstract}

% --- End added in draft -----------------------------

% \begin{IEEEkeywords}
% Aerial manipulation, catenary robots, geometric control, cable-driven
% robots, quadrotors, geometric mechanics.
% \end{IEEEkeywords}

% ====================================================
\section{Introduction}\label{sec:intro}

Cable-connected aerial systems have attracted attention due
to their ability to coordinate multiple vehicles through lightweight and
compliant mechanical links. Compared with rigid or articulated aerial
mechanisms, cable-based systems offer reduced mechanical complexity,
favorable payload-to-weight ratios, and intrinsic compliance \cite{Elastic,Lee-TCST,LeeSrePICDC13}. These
features make them useful in cooperative transportation, load
interaction, and contact-rich aerial tasks where force transmission
through the cable plays a central role \cite{goodman2023geometric, jacob2, pacheco1, pacheco2}.

A particular instance of this class is the \emph{catenary robot},
consisting of two UAVs connected by an inextensible cable \cite{d2021catenary, cardona2021adaptive}.
When gravity is taken into account, the cable assumes a catenary shape
whose geometry depends on the relative configuration of the vehicles.
Thus, the cable does not merely impose a distance constraint: it induces
configuration-dependent forces that couple the motion of the aerial
agents. This coupling can be exploited to regulate the
relative configuration of the vehicles and, indirectly, the shape and
orientation of the cable.

Most existing approaches to cable-connected aerial robots rely on
task-specific models, planar reductions, or simplified descriptions of
the cable forces \cite{jackson2020scalable, oliva2024aerial}. In many cases, the cable is treated either as a
kinematic constraint or as a nominal force-transmission element. Such
models are effective for specific scenarios, but they do not provide a
systematic geometric description of the relation between the UAV
configuration, the induced catenary shape, and the resulting endpoint
forces. This also makes robustness analysis difficult when the ideal
catenary model is affected by unmodeled effects such as cable inertia,
elasticity, aerodynamic drag, or transient oscillations.

This paper addresses this issue by developing a geometric modeling and
control framework for catenary robots. We show
that, on the admissible configuration set, the relative position of the
two aerial agents determines the catenary through a finite set of shape
parameters: its orientation, span, and catenary parameter. The cable is
therefore treated as a geometric subsystem induced by the UAV
configuration, rather than as an independent dynamical element. The
corresponding endpoint tensions define smooth configuration-dependent
forces that enter the translational dynamics of the vehicles.

Building on this structure, we design a geometric tracking controller
for the relative configuration of the two vehicles. The nominal
controller compensates the catenary-induced forces as feedforward terms,
while the attitude dynamics are handled through a standard geometric
controller on \(SO(3)\). We then model deviations from the ideal
quasi-static catenary description as unstructured additive uncertainties
in the catenary forces. We prove local input-to-state
stability of the closed-loop tracking errors with respect to these
uncertainties. As consequences, the nominal tracking errors converge to
zero, while bounded catenary-force perturbations lead to an explicit
ultimate bound whose size decreases with the damping gains.

The remainder of the paper is organized as follows. Section~II presents
the geometric model of the catenary robot. Section~III derives the
catenary-induced forces and formulates the control objective. Section~IV
develops the robust geometric controller and provides the numerical
illustration.

% ====================================================

\section{Geometric Modeling of Catenary Robots}

We consider two multi-rotor UAVs evolving on the special Euclidean group
$SE(3)$, representing position and attitude in three-dimensional space. The configuration of UAV $i \in \{1,2\}$ is
$g_i = (R_i,p_i) \in SE(3)$, where $R_i \in SO(3)$ denotes the attitude and
$p_i \in \mathbb{R}^3$ the position expressed in an inertial frame. We will make use of the following reference frames:

\begin{itemize}
\item \emph{Inertial frame} $\{I\}$: a fixed world frame with origin
$O_I$ and orthonormal basis $\{e_1,e_2,e_3\}$, where $e_3$ is aligned with
the direction opposite to gravity. All positions $p_i$ and velocities
$v_i$ are expressed in this frame.

\item \emph{Body-fixed frames} $\{B_i\}$: a frame rigidly attached to each
aerial agent $i \in \{1,2\}$, with origin at the vehicle's center of mass.
The rotation matrix $R_i \in SO(3)$ maps vectors expressed in $\{B_i\}$ to
the inertial frame $\{I\}$. In particular, the body-fixed thrust direction
corresponds to the third basis vector $e_3$ of $\{B_i\}$.

\item \emph{Catenary frame} $\{C\}$: a moving frame associated with the
plane containing the catenary. The frame $\{C\}$ is defined such that its
third axis is aligned with the inertial vertical direction $e_3$, and its
second axis lies along the horizontal direction connecting the projections
of the agents' positions onto the horizontal plane. The orientation of
$\{C\}$ with respect to $\{I\}$ is given by
$R_C \in SO(3)$.
\end{itemize}

A schematic representation of the catenary robot configuration and the reference frames given above is shown in Fig.~\ref{fig:catenary_geometry}.

%Unless otherwise specified, vectors expressed in $\mathbb{R}^3$ are
%assumed to be represented in the inertial frame $\{I\}$. Quantities
%expressed in the catenary frame $\{C\}$ are explicitly denoted as such.

\begin{figure}[h!]
\centering
\begin{tikzpicture}[scale=1.08, line cap=round, line join=round, >=Latex]

%========================================================
% COLORS
%========================================================
\definecolor{DroneBlue}{RGB}{0,39,76}         % Michigan blue
\definecolor{CatenaryGold}{RGB}{255,203,5}    % Michigan maize
\definecolor{CatenaryGoldDark}{RGB}{210,160,0}
\definecolor{FrameRed}{RGB}{180,70,60}
\definecolor{FrameTeal}{RGB}{0,120,120}
\definecolor{PlaneBlue}{RGB}{120,140,255}

\tikzset{
    axisI/.style={->, semithick, black},
    axisCvert/.style={->, semithick, FrameRed},
    axisChorz/.style={->, semithick, FrameTeal},
    force/.style={->, semithick, DroneBlue},
    tension/.style={->, semithick, CatenaryGoldDark},
    helper/.style={dashed, thin, gray!70},
    ground/.style={fill=gray!10, draw=gray!45, line width=0.3pt},
    catplane/.style={fill=PlaneBlue!8, draw=PlaneBlue!35, line width=0.4pt},
    cable/.style={line width=1.2pt, CatenaryGold},
    dronearm/.style={line width=1.0pt, DroneBlue},
    rotor/.style={line width=0.8pt, DroneBlue},
    lab/.style={font=\small},
    slab/.style={font=\scriptsize},
    pt/.style={circle, fill, inner sep=1.0pt},
}

%========================================================
% GROUND PLANE
%========================================================
\coordinate (G1) at (0.55,0.55);
\coordinate (G2) at (5.95,0.55);
\coordinate (G3) at (7.20,1.28);
\coordinate (G4) at (1.80,1.28);
\filldraw[ground] (G1)--(G2)--(G3)--(G4)--cycle;

%========================================================
% INERTIAL FRAME {I}
%========================================================
\coordinate (OI) at (1.05,0.92);
\draw[axisI] (OI) -- ++(1.02,0) node[below right=-1pt] {$e_1$};
\draw[axisI] (OI) -- ++(0.72,0.40) node[right=-1pt] {$e_2$};
\draw[axisI] (OI) -- ++(0,1.36) node[left=-1pt] {$e_3$};
\node[lab, below left=1pt] at (OI) {$\{I\}$};

%========================================================
% UAV POSITIONS / ENDPOINTS
%========================================================
\coordinate (P1) at (2.70,3.82);
\coordinate (P2) at (6.00,3.82);

% Projections on ground plane
\coordinate (P1xy) at (2.85,1.08);
\coordinate (P2xy) at (5.90,1.08);

\draw[helper] (P1) -- (P1xy);
\draw[helper] (P2) -- (P2xy);
\node[pt] at (P1xy) {};
\node[pt] at (P2xy) {};

% Equal-height guide
\draw[dashed, thin, gray!60] (2.20,3.82) -- (6.30,3.82);

%========================================================
% HORIZONTAL DIRECTION d
%========================================================
\draw[axisI] (P1xy) -- (P2xy) node[midway, below=3pt] {$d$};

%========================================================
% YAW ANGLE psi_C
%========================================================
\draw[thin] (1.86,0.92) arc[start angle=0,end angle=14,radius=0.76];
\node[slab] at (2.25,1.13) {$\psi_C$};

%========================================================
% VERTICAL CATENARY PLANE
%========================================================
\coordinate (CPL1) at (2.05,1.50);
\coordinate (CPL2) at (6.45,1.50);
\coordinate (CPL3) at (6.45,4.45);
\coordinate (CPL4) at (2.05,4.45);
\filldraw[catplane] (CPL1)--(CPL2)--(CPL3)--(CPL4)--cycle;

%========================================================
% CATENARY
%========================================================
% Vertex chosen at x = 4.35, y = 2.80
\draw[cable, samples=140, domain=-1.65:1.65, smooth, variable=\x]
plot ({4.35+\x}, {2.80 + 0.42*cosh(\x/0.92) - 0.42});

%========================================================
% CATENARY FRAME {C} with origin at the vertex
%========================================================
\coordinate (OC) at (4.35,2.80);
\draw[axisChorz] (OC) -- ++(0.78,0) node[below=1pt, FrameTeal] {$e_2^{C}$};
\draw[axisCvert] (OC) -- ++(0,0.82) node[left=1pt, FrameRed] {$e_3^{C}$};
\node[lab, below left=1pt] at (OC) {$\{C\}$};

%========================================================
% UAV 1
%========================================================
\begin{scope}[shift={(2.70,3.82)}, rotate=15]
    \draw[dronearm] (-0.24,0) -- (0.24,0);
    \draw[dronearm] (0,-0.16) -- (0,0.16);
    \draw[rotor] (-0.26,0) circle (0.032);
    \draw[rotor] ( 0.26,0) circle (0.032);
    \draw[rotor] (0,-0.18) circle (0.032);
    \draw[rotor] (0, 0.18) circle (0.032);
    \fill[DroneBlue] (0,0) circle (1.1pt);
\end{scope}

%========================================================
% UAV 2
%========================================================
\begin{scope}[shift={(6.00,3.82)}, rotate=-13]
    \draw[dronearm] (-0.24,0) -- (0.24,0);
    \draw[dronearm] (0,-0.16) -- (0,0.16);
    \draw[rotor] (-0.26,0) circle (0.032);
    \draw[rotor] ( 0.26,0) circle (0.032);
    \draw[rotor] (0,-0.18) circle (0.032);
    \draw[rotor] (0, 0.18) circle (0.032);
    \fill[DroneBlue] (0,0) circle (1.1pt);
\end{scope}

%========================================================
% LABELS p1, p2
%========================================================
\node[lab] at (2.33,4.00) {$p_1$};
\node[lab] at (6.37,4.00) {$p_2$};

%========================================================
% THRUSTS
%========================================================
\draw[force] (2.70,3.82) -- ++(-0.10,0.86)
    node[above left=-1pt, DroneBlue] {$f_1R_1e_3$};

\draw[force] (6.00,3.82) -- ++(0.10,0.86)
    node[above right=-1pt, DroneBlue] {$f_2R_2e_3$};

%========================================================
% TENSIONS
%========================================================
\tikzset{tensionLine/.style={thick}}
\draw[tensionLine] (2.70,3.82) -- ++(0.16,-0.50)
    node[left, above=3pt, xshift=3pt, CatenaryGoldDark] {$T_1$};

\draw[tensionLine] (6.00,3.82) -- ++(-0.16,-0.50)
    node[right, above=2pt, xshift=-5pt, CatenaryGoldDark] {$T_2$};
%\draw[tension] (2.70,3.82) -- ++(0.56,-0.10) % 'tension' is a TikZ style, not a key
   % node[midway, above=2pt, xshift=1pt, CatenaryGoldDark] {$T_1$};

%\draw[tension] (6.00,3.82) -- ++(-0.56,-0.10)
 %   node[midway, above=2pt, xshift=-1pt, CatenaryGoldDark] {$T_2$};

\end{tikzpicture}
\caption{Configuration of the catenary robot.}
\label{fig:catenary_geometry}
\end{figure}
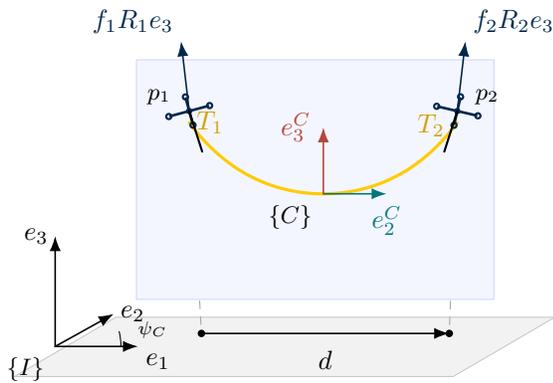

The body angular velocity $\Omega_i \in \mathbb{R}^3$ and translational
velocity $v_i \in \mathbb{R}^3$ satisfy the kinematics
\begin{equation}
\dot{R}_i = R_i \hat{\Omega}_i, \qquad \dot{p}_i = v_i,
\end{equation}
where $(\cdot)^\wedge : \mathbb{R}^3 \to \mathfrak{so}(3)$ denotes the hat
map (i.e., and isomorphism between vectors on $\mathbb{R}^{3}$ with skew-symmetric matrices, see e.g., \cite{lee2018global}).

Each UAV is an underactuated aerial robot of quadrotor type, with
control inputs $u_i = (f_i,\tau_i) \in \mathbb{R} \times \mathbb{R}^3$,
where $f_i$ is the total thrust generated along the body-fixed $e_3$
axis and $\tau_i \in \mathbb{R}^3$ are the control torques. The equations
of motion of UAV $i$ are 
\begin{align}
\dot{p}_i &= v_i, \label{eq:transl_kin}\\
m_i \dot{v}_i &= f_i R_i e_3 - m_i g e_3 + R_C T_i, \label{eq:transl_dyn}\\
\dot{R}_i &= R_i \hat{\Omega}_i, \label{eq:rot_kin}\\
J_i \dot{\Omega}_i + \Omega_i \times J_i \Omega_i &= \tau_i,
\label{eq:rot_dyn}
\end{align}
where $m_i>0$ is the mass, $J_i \in \mathbb{R}^{3\times 3}$ is the inertia
matrix, $g>0$ is the gravitational constant, and $T_i \in \mathbb{R}^3$
denotes the tension exerted by the cable on UAV $i$, in the
catenary frame.

The UAVs are connected by a massless, inextensible cable of fixed length
$l>0$. Consequently, the system is subject to the holonomic constraint
$
0 < \|p_1 - p_2\| \leq l$, which prevents both collision between the agents and cable slackness. Throughout the paper, we assume that the system evolves on the admissible
configuration set
\begin{equation*}
\mathcal{A}
=
\left\{
(p_1,p_2) \in \mathbb{R}^3 \times \mathbb{R}^3
\ \middle| \
0 < \|p_1 - p_2\| \leq l, \ p_{1z}=p_{2z}
\right\}.
\end{equation*}

Each admissible configuration of the pair $(p_1,p_2)$ uniquely determines
the configuration of a catenary. We will also assume that the catenary is in standard configuration, i.e., its endpoints are located at the same height, which imposes the second constraint in $\mathcal{A}$ by requiring that $p_{1z}=p_{2z}$.

We will describe the catenary configuration by the tuple
$c = (R_C, s, a) \in SO(3) \times \mathbb{R}_{>0} \times \mathbb{R}_{>0}$, where $R_C$ denotes the orientation of the vertical plane containing the
catenary, $s$ is the half-span, and $a$ is the catenary shape parameter. Equivalently, the catenary configuration may be viewed as the minimizer of a potential energy
functional under fixed endpoint constraints. 

The orientation $R_C$ is defined such that the catenary lies in the
vertical plane spanned by $e_3$ and the horizontal direction
\[
d = \frac{(p_2 - p_1)_{xy}}{\|(p_2 - p_1)_{xy}\|},
\]
where $(\cdot)_{xy}$ denotes projection onto the horizontal plane.
Accordingly, we define
\begin{equation}
R_C = R_z(\psi_C), \qquad
\psi_C = \mathrm{atan2}(y_2 - y_1, x_2 - x_1).
\end{equation}

The projection $(\cdot)_{xy}$ is used to extract the horizontal separation
between the UAVs, which uniquely defines the vertical plane containing
the catenary together with the gravity direction $e_3$. The half-span of the catenary is given by
\begin{equation}
s = \frac{\|(p_2 - p_1)_{xy}\|}{2}.
\end{equation}
The shape parameter $a>0$ is uniquely determined by the cable length $l$
through the intrinsic catenary relation
\begin{equation}
\frac{l}{2} = a \sinh\!\left(\frac{s}{a}\right).
\end{equation}

Let $r \in [-s,s]$ be the catenary parameter. In the catenary frame
$\{C\}$, the cable centerline is described by
\begin{equation}
\alpha_C(r) =
\begin{bmatrix}
0 \\
r \\
a\!\left(\cosh\!\frac{r}{a} - 1\right)
\end{bmatrix}.
\end{equation}
Note that this expression is valid whenever the endpoints are at the same height.

The catenary curve expressed in the inertial frame is then given by
$\alpha(r) = R_C \alpha_C(r) + c_0$, where the offset
\begin{equation}
c_0 = \tfrac{1}{2}(p_1+p_2) -
a\!\left(\cosh\!\frac{s}{a}-1\right)e_3
\end{equation}
corresponds to the inertial frame coordinates of the catenary vertex and ensures that the endpoints satisfy $\alpha(-s)=p_1$ and
$\alpha(s)=p_2$. In addition, for all $(p_1,p_2) \in \mathcal{A}$, the catenary configuration and the
corresponding endpoint tensions $T_1,T_2$ are uniquely defined and depend
smoothly on $(p_1,p_2)$.

\section{Catenary-Induced Forces and Control Objective}

In this section, we characterize the forces exerted by the catenary on the UAVs as a consequence of the cable geometry and gravity and the control objective. Since the cable is assumed to be inextensible and its inertia is neglected,
its configuration is not described by independent dynamical states. Instead, the cable is assumed to remain in quasi-static equilibrium under its own distributed weight and the endpoint constraints imposed by the quadrotors.

\subsection{Catenary-Induced Forces}

Consider the catenary curve parameterized by $r \in [-s,s]$ in the frame $\{C\}$. Under gravity, an inextensible cable suspended between two points assumes
a catenary shape determined by static equilibrium. In this configuration,
the tension along the cable has a constant horizontal component, while
its vertical component varies along the arc length to balance gravity.

In our model, the cable is assumed to be inextensible and to
remain in quasi-static equilibrium under gravity. Its configuration is
therefore fully determined by the positions of the aerial vehicles, and it
does not introduce additional independent dynamical states. Here \(w>0\) denotes the effective weight per unit length of the cable.
Thus, although the cable inertia is neglected in the equations of motion,
its gravitational loading is retained in the static equilibrium problem that
determines the catenary shape and the endpoint tensions.

Under these assumptions, the tension vector expressed in the catenary
frame at parameter $r$ is given by
\begin{equation}
T_C(r)=
\begin{bmatrix}
0\\
\pm wa\\
wa\,\sinh\!\left(\frac{r}{a}\right)
\end{bmatrix},
\label{eq:catenary_tension_distribution}
\end{equation}
where the sign of the horizontal component depends on the orientation
along the cable. The signs of the horizontal
components are chosen so that the endpoint forces act from
each attachment point toward the interior of the cable. Hence,
the endpoint forces have opposite horizontal components and
equal vertical components. Its magnitude is $\|T_C(r)\| = wa\,\cosh\!\left(\frac{r}{a}\right)$, which increases monotonically with $|r|$ and attains its maximum at the
endpoints of the catenary.

The forces exerted by the cable on the UAVs correspond to the
endpoint tensions evaluated at $r=\pm s$. Denoting by $T_1$ and $T_2$ the
tensions acting on UAVs $1$ and $2$, respectively, and expressed in the
catenary frame $\{C\}$, we obtain $T_1 = T_C(-s)$ and 
$T_2 = T_C(s)$.

These forces are mapped into the inertial frame through the catenary
orientation $R_C$, yielding the contribution $R_C T_i$ in the translational
dynamics of each agent, as shown in \eqref{eq:transl_dyn}.

By symmetry, the horizontal components of the endpoint tensions are equal
in magnitude and opposite in direction, while the tension magnitude is the
same at both attachment points. As a result, the catenary induces both
coupling forces between the UAVs and an additional load that must be
balanced by the thrust inputs.

A key observation is that the endpoint tensions $T_i$ depend on the agent
configuration solely through the geometric parameters of the catenary,
namely the orientation $R_C$, the half-span $s$, and the shape parameter
$a$. Since these quantities are uniquely determined by the relative
positions of the quadrotors, the catenary-induced forces can be interpreted
as a smooth map $(p_1,p_2)\longmapsto (T_1,T_2)$, which couples the translational dynamics of the aerial vehicles.

This geometric dependence highlights that the cable does not introduce
additional dynamical states, but acts as a shape-dependent force
transmission mechanism. Consequently, control of the UAVs implicitly
shapes the catenary configuration and modulates the forces exchanged
through the cable.

The smoothness statement holds on the open admissible set corresponding to
non-degenerate catenary configurations, namely configurations for which the
horizontal endpoint separation is strictly smaller than the cable length.
The limiting case in which the cable is completely taut is excluded from
the present analysis.

\subsection{Control objective}\label{controlobj}

The control inputs available for each UAV are the thrust magnitude
$f_i \in \mathbb{R}$ and the body torques $\tau_i \in \mathbb{R}^3$, which
directly affect the translational and rotational dynamics of the
quadrotors. Through these inputs, the UAVs can influence their positions
$p_i$ and orientations $R_i$, which in turn uniquely determine the
configuration of the catenary and the forces transmitted through the
cable. The cable does not introduce additional dynamical
states, but induces configuration-dependent forces $T_i$ that couple the
motion of the UAVs. Therefore, the control problem consists in designing
the inputs $(f_i,\tau_i)$ so as to achieve desired cooperative behaviors
while respecting the geometric constraints imposed by the cable.

%\subsection{Control Objectives and Problem Statement}

The primary control objective is to regulate the relative
configuration of the UAVs through the catenary-induced coupling.
Specifically, given a desired relative position trajectory
$p_{12}^d(t) \in \mathbb{R}^3$, the goal is to drive the system such that
\begin{equation}
p_2(t) - p_1(t) \rightarrow p_{12}^d(t),
\end{equation}
while maintaining the feasibility of the cable constraint.

Equivalently, this objective can be interpreted as driving the catenary
configuration toward a prescribed geometric shape
\begin{equation}
c(t) = (R_C(t), s(t), a(t)) \rightarrow c_d(t),
\end{equation}
where $c_d(t)$ denotes a desired time-varying catenary configuration
consistent with the cable length constraint.

%\textcolor{red}{Note that by defining a desired relative position, we are not imposing a unique trajectory in space for the catenary. For instance, the controller would not distinguish between two drones standing still or moving in parallel. In both cases the desired relative position would be constant. What we are controlling in this  way is the geometric shape and orientation of the catenary and not its dynamics.}

This objective is invariant under common translations of the
two UAVs. Therefore, prescribing \(p_{12}^d(t)\) does not define
a unique absolute trajectory for the catenary in the inertial
frame. Rather, it specifies the relative geometry of the UAV
pair, and consequently the orientation and shape parameters
\((R_C,s,a)\) of the induced catenary. In this sense, the
controller regulates the catenary geometry through the UAV
relative configuration, rather than assigning independent
dynamics to the cable.

Beyond the relative position objective, additional performance goals include
the alignment of each agent attitude $R_i$ with the required force directions,
the preservation of bounded translational and rotational velocities, and the
smooth shaping of catenary-induced forces to prevent excessive tension peaks.

%The control problem addressed in this paper can be summarized as follows.

\medskip
\noindent\textbf{Problem 1.}
\emph{Given a feasible desired relative trajectory
\(p_{12}^d(t)\), design thrust and torque inputs
\((f_i,\tau_i)\), \(i=1,2\), such that the relative tracking error
$e_p(t) = (p_2(t)-p_1(t)) - p_{12}^d(t)$ converges to zero in the nominal catenary model, while the
closed-loop trajectories remain in the admissible set
\(\mathcal A\). Moreover, when the nominal catenary-induced
forces are affected by bounded unstructured uncertainties,
the closed-loop relative tracking error should remain bounded
with an ultimate bound depending on the size of the uncertainty.}

%\textcolor{red}{Let us read again this section after we finish the paper. This section seems a description of what we will do.}

\section{Geometric Control and Robustness Analysis}

In this section, we develop a geometric control strategy for the
catenary robot that achieves the objectives stated in Section~\ref{controlobj}. %The proposed controller is designed directly on $SE(3)$ and explicitly exploits the geometric
%structure of the catenary-induced coupling between the aerial robots.
%We first present the nominal tracking controller and then analyze
%its robustness with respect to uncertainties in the catenary model.

\subsection{Geometric tracking control}

To stabilize the desired relative trajectory of the two-UAV formation,
we assign an individual reference trajectory to each UAV. In particular,
we stabilize UAV~1 around the constant point $p_1^d(t)\equiv p_{1,0}\in\mathbb{R}^3$, which may be chosen, for instance, as its initial location, and we define
the desired trajectory of UAV~2 by $p_2^d(t)=p_{1,0}+p_{12}^d(t)$. In this way,
\[
p_2^d(t)-p_1^d(t)=p_{12}^d(t),
\]
so that tracking of the individual reference trajectories implies
convergence of the relative position toward the desired one.

For each agent, define the position and velocity tracking errors as
$e_{p_i}=p_i-p_i^d$ and  $e_{v_i}=v_i-\dot p_i^d$. The thrust input for each agent is chosen as
\begin{equation}
f_i=
\Big(
-K_{p_i}e_{p_i}
-K_{v_i}e_{v_i}
+m_i g e_3
-R_C T_i
+m_i\ddot p_i^d
\Big)\cdot R_i e_3,
\label{eq:thrust_control_law}
\end{equation}
where $K_{p_i},K_{v_i}\in\mathbb{R}^{3\times 3}$ are symmetric positive
definite gain matrices.

The control torques are designed using a standard geometric controller on
$SO(3)$,
\begin{align}
\tau_i
=&
-k_R e_{R_i}
-k_\Omega e_{\Omega_i}
+\Omega_i\times J_i\Omega_i
\nonumber\\&-J_i
\Big(
\hat\Omega_i R_i^\top R_i^d \Omega_i^d
-
R_i^\top R_i^d \dot\Omega_i^d
\Big),
\label{eq:torque_control_law}
\end{align}
where $k_R,k_\Omega>0$ are control gains.

The attitude error is defined by
$e_{R_i}
=
\frac12
\Big(
(R_i^d)^\top R_i
-
R_i^\top R_i^d
\Big)^\vee$, and the angular velocity error is given by
$e_{\Omega_i}
=
\Omega_i-R_i^\top R_i^d \Omega_i^d$. The desired attitude $R_i^d$ is chosen so that its third body axis aligns
with the desired total force direction. More precisely, let
\begin{equation}
b_{3_i}^d
=
\frac{
-K_{p_i}e_{p_i}
-K_{v_i}e_{v_i}
+m_i g e_3
-R_C T_i
+m_i\ddot p_i^d
}{
\left\|
-K_{p_i}e_{p_i}
-K_{v_i}e_{v_i}
+m_i g e_3
-R_C T_i
+m_i\ddot p_i^d
\right\|
},
\label{eq:desired_b3}
\end{equation}
and let $b_{1_i}^d$ denote a prescribed heading direction for agent~$i$.
Then the desired attitude is defined by
\begin{equation}
R_i^d=
\big[
b_{1_i}^d,\;
b_{3_i}^d\times b_{1_i}^d,\;
b_{3_i}^d
\big].
\label{eq:desired_attitude}
\end{equation}
We assume throughout that the denominator in \eqref{eq:desired_b3} is
nonzero.

Under the control laws \eqref{eq:thrust_control_law} and
\eqref{eq:torque_control_law}, standard results on geometric tracking
control for quadrotors on $SE(3)$ (see \cite{lee2010geometric}),  imply that, in the nominal case,
the position, velocity, attitude, and angular velocity tracking errors
decay to zero. The additional term $R_C T_i$ acts as a known feedforward
force induced by the catenary and does not alter the structure of the
tracking controller. Therefore, in the nominal catenary model, the proposed control law ensures
asymptotic convergence of the relative configuration of the catenary robot
to the desired geometry, while respecting the geometric constraints
imposed by the cable. In particular, since the cable-induced forces depend
only on the geometric configuration of the UAV pair, the controller
implicitly shapes the catenary through the regulation of the individual
agent trajectories.

\subsection{Robustness with respect to catenary model uncertainty}

We now analyze the robustness of the proposed geometric control law with
respect to uncertainty in the catenary model. 

In practice, the cable may
deviate from the ideal quasi-static catenary assumption due to
aerodynamic disturbances, elasticity, unmodeled mass distribution, or
transient dynamics. We model these effects as additive perturbations in
the catenary-induced forces. More precisely, we assume that the actual force transmitted by the cable
to agent~$i$ satisfies
\begin{equation}
R_C T_i^{\mathrm{act}} = R_C T_i + d_i,\quad i=1,2
\label{eq:actual_catenary_force}
\end{equation}
where $T_i$ denotes the nominal catenary tension from Section~III and
$d_i\in\mathbb{R}^3$ is an unknown disturbance term. We assume that
$\|d_i(t)\|\le \bar d$ for all $t\ge 0$ and for some finite constant
$\bar d>0$.

For the robustness analysis, we work with the relative translational
tracking errors $e_p = (p_2-p_1)-p_{12}^d$ and $e_v = (v_2-v_1)-\dot p_{12}^d$. We denote by \(K_p,K_v\in\mathbb{R}^{3\times 3}\) the 
relative position and velocity gain matrices, respectively, and assume
that they are symmetric positive definite. These gains are induced by
the tracking gains in \eqref{eq:thrust_control_law} after passing to relative coordinates.

Under the presence of cable model uncertainty, the relative translational
error dynamics become
\begin{equation}
m_r \dot e_v = -K_p e_p - K_v e_v + \Delta_d,
\label{eq:perturbed_relative_dynamics}
\end{equation}
where $m_r$ is the reduced mass and
\begin{equation}
\Delta_d=
\frac{m_2 d_2 - m_1 d_1}{m_1+m_2}
\label{eq:effective_disturbance}
\end{equation}
denotes the effective disturbance acting on the relative translational
dynamics. By assumption, there exists $\bar\Delta>0$ such that
$\|\Delta_d\|\le \bar\Delta$.

\begin{theorem}
Consider the catenary robot under the geometric control law
\eqref{eq:thrust_control_law}--\eqref{eq:desired_attitude}, and assume
that the nominal catenary-induced forces are perturbed according to
\eqref{eq:actual_catenary_force}. Suppose that the desired trajectories are smooth and bounded, and that the
attitude errors remain in a neighborhood of the desired attitudes where the quantity \(\operatorname{tr}(I-(R_i^d)^\top R_i)\) is locally positive
definite with respect to \(e_{R_i}\), for \(i=1,2\).

Then there exist sufficiently small constants
\(\alpha>0\) and \(\beta_i>0\), \(i=1,2\), such that the Lyapunov function \begin{align}
V
&=
\frac12 m_r e_v^\top e_v
+\frac12 e_p^\top K_p e_p
+\alpha e_p^\top e_v \label{eq:augmented_Lyapunov_candidate}        \\                
&+
\sum_{i=1}^2
\left[
\frac12 e_{\Omega_i}^\top J_i e_{\Omega_i}
+\frac{k_R}{2}
\operatorname{tr}\!\left(I-(R_i^d)^\top R_i\right)
+\beta_i e_{R_i}^\top J_i e_{\Omega_i}
\right]\nonumber
\end{align} is positive definite w.r.t 
\(e=(e_p,e_v,e_{R_1},e_{\Omega_1},e_{R_2},e_{\Omega_2})\). Moreover, there
exist constants \(c_p,c_v,c_R,c_\Omega,\gamma>0\) such that
\begin{equation}
\begin{aligned}
\dot V
\le&
-c_p\|e_p\|^2
-c_v\|e_v\|^2
-c_R\sum_{i=1}^2\|e_{R_i}\|^2        \\
&-
c_\Omega\sum_{i=1}^2\|e_{\Omega_i}\|^2
+\gamma\|\Delta_d\|^2 .
\end{aligned}
\label{eq:strict_ISS_dissipation_inequality}
\end{equation}
Consequently, the closed-loop tracking error system is input-to-state
stable (ISS) w.r.t the catenary force uncertainty \(\Delta_d\).
\end{theorem}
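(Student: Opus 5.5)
The plan is a Lyapunov argument in the style of the standard $SE(3)$ geometric tracking analysis \cite{lee2010geometric}. The translational block and the two attitude blocks are treated separately. The translational block is driven by $\Delta_d$ through \eqref{eq:perturbed_relative_dynamics}. Any coupling from the attitude error into the realized thrust is absorbed using the local bounds on the attitude neighborhood. The proof has three steps: positive definiteness of $V$, the dissipation inequality \eqref{eq:strict_ISS_dissipation_inequality}, and the ISS conclusion.

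\emph{Step 1: positive definiteness.} The attitude term $\operatorname{tr}(I-(R_i^d)^\top R_i)$ is locally positive definite in $e_{R_i}$. On the assumed neighborhood this gives $b_1\|e_{R_i}\|^2\le \operatorname{tr}(I-(R_i^d)^\top R_i)\le b_2\|e_{R_i}\|^2$ for some $0<b_1\le b_2$; for instance, $b_1=1/2$ and $b_2=1/(2-\psi)$ in the usual sublevel set with $\psi<2$. With this bound, each block of $V$ is sandwiched between quadratic forms:
\[
z_t^\top W_{t,1} z_t\le V_t\le z_t^\top W_{t,2} z_t,\qquad z_t=(\|e_p\|,\|e_v\|),
\]
\[
W_{t,1}=\tfrac12\begin{bmatrix}\lambda_{\min}(K_p)&-\alpha\\-\alpha&m_r\end{bmatrix},
\]
and similarly for $z_i=(\|e_{R_i}\|,\|e_{\Omega_i}\|)$, with off-diagonal entries $\beta_i\lambda_{\max}(J_i)/2$. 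These matrices are positive definite exactly when
\[
\alpha<\sqrt{m_r\lambda_{\min}(K_p)},\qquad \beta_i<\sqrt{k_R b_1\lambda_{\min}(J_i)}/\lambda_{\max}(J_i),
\]
which gives the first claim.

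\emph{Step 2: dissipation inequality.} Differentiating the translational part along \eqref{eq:perturbed_relative_dynamics} gives
\[
\dot V_t=-e_v^\top K_ve_v+\alpha\|e_v\|^2-\tfrac{\alpha}{m_r}e_p^\top K_pe_p-\tfrac{\alpha}{m_r}e_p^\top K_ve_v+(e_v+\tfrac{\alpha}{m_r}e_p)^\top\Delta_d .
\]
For the attitude part, the torque law \eqref{eq:torque_control_law} yields the standard closed loop $J_i\dot e_{\Omega_i}=-k_Re_{R_i}-k_\Omega e_{\Omega_i}$. Together with $\frac{d}{dt}\tfrac12\operatorname{tr}(I-(R_i^d)^\top R_i)=e_{R_i}^\top e_{\Omega_i}$ and $\|\dot e_{R_i}\|\le\|e_{\Omega_i}\|$, this gives
\[
\dot V_{R_i}\le-k_\Omega\|e_{\Omega_i}\|^2-\beta_ik_R\|e_{R_i}\|^2+\beta_ik_\Omega\|e_{R_i}\|\|e_{\Omega_i}\|+\beta_i\lambda_{\max}(J_i)\|e_{\Omega_i}\|^2 .
\]
The cross terms in both blocks are handled by Young's inequality. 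The disturbance term is bounded by
\[
(e_v+\tfrac{\alpha}{m_r}e_p)^\top\Delta_d\le \tfrac{\epsilon}{2}\big(\|e_v\|^2+\tfrac{\alpha^2}{m_r^2}\|e_p\|^2\big)+\tfrac1\epsilon\|\Delta_d\|^2 ,
\]
so $\gamma=1/\epsilon$. The attitude-induced thrust mismatch, i.e.\ the term $f_i(R_ie_3-R_i^de_3)$, is bounded by a constant times $\|e_{R_i}\|$ times the bounded desired force. It contributes cross terms $\|e_{R_i}\|(\|e_v\|+\alpha\|e_p\|/m_r)$, which are again split by Young's inequality. Finally, choosing $\alpha,\beta_i,\epsilon$ small (relative to $\lambda_{\min}(K_v)$, $k_\Omega$, and the mismatch constant) makes every diagonal coefficient strictly negative, giving $c_p,c_v,c_R,c_\Omega>0$.

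\emph{Step 3: ISS.} From Steps 1 and 2, $\underline\lambda\|e\|^2\le V\le\overline\lambda\|e\|^2$ and $\dot V\le-c\|e\|^2+\gamma\|\Delta_d\|^2$ with $c=\min\{c_p,c_v,c_R,c_\Omega\}$. Hence $\dot V\le -(c/\overline\lambda)V+\gamma\|\Delta_d\|^2$. The comparison lemma then gives local ISS, with ultimate bound $\|e\|\le\sqrt{\gamma\overline\lambda/(c\,\underline\lambda)}\,\bar\Delta$. This requires $\bar\Delta$ small enough that the sublevel set stays inside the attitude neighborhood, which forward invariance guarantees.

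The main obstacle is the coupling in Step 2. The relative error equation \eqref{eq:perturbed_relative_dynamics} is stated as if the thrust were exactly realized. In truth the attitude errors feed into it, and the desired attitude depends on $e_p,e_v$ and on $\Delta_d$ through $\dot R_i^d$ and $\Omega_i^d$. I would first bound these terms on the compact neighborhood using boundedness of the desired trajectories and of $R_CT_i$. I would then verify that the gain ordering $\alpha,\beta_i\ll$ damping gains closes the Young estimates.
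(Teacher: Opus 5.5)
Your skeleton matches the paper's proof. The paper differentiates $V_t$ along \eqref{eq:perturbed_relative_dynamics}, applies Young's inequality with $\alpha$ small, and takes $J_i\dot e_{\Omega_i}=-k_Re_{R_i}-k_\Omega e_{\Omega_i}$ as the exact attitude closed loop. It then bounds $\|\dot e_{R_i}\|\le c_i\|e_{\Omega_i}\|$, picks $\beta_i$ small, simply adds the translational and attitude bounds, and gets ISS from the quadratic sandwich. Because both error equations are treated as exact, the two blocks never interact, and smallness of $\alpha,\beta_i$ is all that is needed. Your Step~1 makes explicit the thresholds the paper only asserts. One small slip: for the trace itself the constants are $b_1=1$, $b_2=2/(2-\psi)$, so your $b_2$ is off by a factor $2$, which is harmless. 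Had you stated the same decoupling idealization, your proposal would be correct and essentially the paper's.

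The step where you go beyond the paper does not close as you claim.

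\textbf{Thrust-direction mismatch.} This puts a bilinear term $CB\,\|e_v\|\,\|e_{R_i}\|$ into $\dot V$. Here $B$ bounds the numerator of \eqref{eq:desired_b3}, which is essentially the hover thrust and so is not small. The only $\|e_{R_i}\|^2$ dissipation in $\dot V$ is about $\beta_ik_R/2$, and it vanishes with $\beta_i$. Absorbing the cross term by Young's inequality requires roughly $2a_v\beta_ik_R>C^2B^2$, with $a_v\approx\lambda_{\min}(K_v)$. That is a \emph{lower} bound on $\beta_i$, so shrinking $\alpha,\beta_i,\epsilon$ makes things worse, not better. Reconciling it with the upper bounds on $\beta_i$ from Step~1 and from the $\|e_{\Omega_i}\|^2$ coefficient forces a gain condition ($k_R$, $K_v$ large relative to $B$), as in \cite{lee2010geometric}. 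Alternatively you could use the cascade structure with a reweighted function $V_t+\mu\sum_iV_{R_i}$, $\mu$ large, but that is not the $V$ in the statement.

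\textbf{Remaining holes in your ``main obstacle'' plan.} First, $B$ depends on the individual errors $e_{p_i},e_{v_i}$. These are not components of $e$ and $V$ does not control them, so ``bounded on the neighborhood'' needs a separate argument. Second, $\Omega_i^d$ and $\dot\Omega_i^d$ depend on $d_i$ and $\dot d_i$. Either the torque law needs the unknown $\dot d_i$, or the attitude loop picks up inputs that are neither bounded by $\|d_i\|\le\bar d$ nor expressible through $\Delta_d$. Either way this is incompatible with \eqref{eq:strict_ISS_dissipation_inequality}.

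So you must either adopt the paper's decoupled error equations explicitly, or add gain conditions and hypotheses on $\dot d_i$, which changes the theorem.
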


\begin{proof}
We first consider the relative translational dynamics
\[
m_r\dot e_v=-K_p e_p-K_v e_v+\Delta_d,\qquad \dot e_p=e_v .
\]
The translational part of \(V\) is
\[
V_t=
\frac12 m_r e_v^\top e_v
+\frac12 e_p^\top K_p e_p
+\alpha e_p^\top e_v .
\]
For \(\alpha>0\) sufficiently small, \(V_t\) is positive definite since
\(m_r>0\) and \(K_p=K_p^\top>0\). Differentiating \(V_t\) along the
relative error dynamics gives
\begin{align}
\dot V_t
=&
e_v^\top(-K_p e_p-K_v e_v+\Delta_d)
+e_p^\top K_p e_v+\alpha\|e_v\|^2    \nonumber \\
&+
\frac{\alpha}{m_r}e_p^\top
(-K_p e_p-K_v e_v+\Delta_d).
\label{eq:Vt_derivative_expanded}
\end{align}
Since \(K_p\) is symmetric, the terms
\(-e_v^\top K_p e_p\) and \(e_p^\top K_p e_v\) cancel. Hence
\begin{align}
\dot V_t
=&
-e_v^\top K_v e_v
+\alpha\|e_v\|^2
-\frac{\alpha}{m_r}e_p^\top K_p e_p
-\frac{\alpha}{m_r}e_p^\top K_v e_v        \nonumber \\
&+
\left(e_v+\frac{\alpha}{m_r}e_p\right)^\top\Delta_d .
\label{eq:Vt_derivative}
\end{align}
The mixed term is bounded, for any \(\eta_t>0\), by
\[
\frac{\alpha}{m_r}|e_p^\top K_v e_v|
\le
\frac{\alpha\|K_v\|}{2m_r}
\left(
\eta_t\|e_p\|^2+\frac{1}{\eta_t}\|e_v\|^2
\right),
\]
where $\|K_v\|$ is any matrix norm and in the last step we have applied Young's inequality. Thus, choosing \(\alpha>0\) sufficiently small, there exist
\(a_p,a_v>0\) such that
\[
\dot V_t
\le
-a_p\|e_p\|^2-a_v\|e_v\|^2
+
\left(e_v+\frac{\alpha}{m_r}e_p\right)^\top\Delta_d .
\]
Applying again Young's inequality to the disturbance term, and taking the
corresponding parameter sufficiently small, gives
\begin{equation}
\dot V_t
\le
-\bar a_p\|e_p\|^2
-\bar a_v\|e_v\|^2
+\gamma_t\|\Delta_d\|^2
\label{eq:translational_ISS_bound}
\end{equation}
for some \(\bar a_p,\bar a_v,\gamma_t>0\).

We now consider the attitude part. The rotational part of $V$ is composed of the terms
$$V_{R_{i}}=\frac12 e_{\Omega_i}^\top J_i e_{\Omega_i}
+\frac{k_R}{2}
\operatorname{tr}\!\left(I-(R_i^d)^\top R_i\right)
+\beta_i e_{R_i}^\top J_i e_{\Omega_i}$$
Under the geometric attitude controller, the feedforward terms cancel the
desired attitude dynamics and the closed-loop attitude error dynamics take
the form
\[
J_i\dot e_{\Omega_i}
=
-k_R e_{R_i}
-k_\Omega e_{\Omega_i},
\qquad i=1,2.
\]
Moreover, along the closed-loop attitude dynamics,
\begin{equation}
\frac{d}{dt}
\left[
\frac12 e_{\Omega_i}^\top J_i e_{\Omega_i}
+
\frac{k_R}{2}
\operatorname{tr}\!\left(I-(R_i^d)^\top R_i\right)
\right]
=
-k_\Omega\|e_{\Omega_i}\|^2 .
\label{eq:standard_attitude_energy_derivative}
\end{equation}
The derivative of the cross term satisfies
\begin{align}
\frac{d}{dt}
\left(\beta_i e_{R_i}^\top J_i e_{\Omega_i}\right)
=&
\beta_i \dot e_{R_i}^\top J_i e_{\Omega_i}
-\beta_i k_R\|e_{R_i}\|^2      \nonumber \\
&-
\beta_i k_\Omega e_{R_i}^\top e_{\Omega_i}.
\label{eq:attitude_cross_derivative}
\end{align}
In a neighborhood of the desired attitude, there exists \(c_i>0\) such that
\(\|\dot e_{R_i}\|\le c_i\|e_{\Omega_i}\|\). Therefore,
\[
\beta_i \dot e_{R_i}^\top J_i e_{\Omega_i}
\le
\beta_i c_i\|J_i\|\,\|e_{\Omega_i}\|^2,
\]
and Young's inequality gives
\[
\beta_i k_\Omega |e_{R_i}^\top e_{\Omega_i}|
\le
\frac{\beta_i k_R}{2}\|e_{R_i}\|^2
+
\frac{\beta_i k_\Omega^2}{2k_R}\|e_{\Omega_i}\|^2 .
\]
Combining these bounds with \eqref{eq:standard_attitude_energy_derivative}
yields
\begin{equation}
\begin{aligned}
\dot V_{R_i}
\le&
-\frac{\beta_i k_R}{2}\|e_{R_i}\|^2        \\
&-
\left(
k_\Omega
-\beta_i c_i\|J_i\|
-\frac{\beta_i k_\Omega^2}{2k_R}
\right)
\|e_{\Omega_i}\|^2.
\end{aligned}
\label{eq:attitude_strict_bound}
\end{equation}
Choosing each \(\beta_i>0\) sufficiently small, there exist
\(\bar c_{R_i},\bar c_{\Omega_i}>0\) such that
\begin{equation}
\dot V_{R_i}
\le
-\bar c_{R_i}\|e_{R_i}\|^2
-\bar c_{\Omega_i}\|e_{\Omega_i}\|^2 .
\label{eq:attitude_ISS_bound}
\end{equation}

Adding \eqref{eq:translational_ISS_bound} and
\eqref{eq:attitude_ISS_bound} for \(i=1,2\), we obtain
\eqref{eq:strict_ISS_dissipation_inequality} for suitable positive
constants \(c_p,c_v,c_R,c_\Omega,\gamma\).

Finally, since \(V\) is positive definite and locally equivalent to
\(\|e\|^2\), there exist \(m_1,m_2>0\) such that
\(m_1\|e\|^2\le V(e)\le m_2\|e\|^2\) in the considered neighborhood.
Let \(c_*=\min\{c_p,c_v,c_R,c_\Omega\}\). Since
\(V(e)\le m_2\|e\|^2\), \eqref{eq:strict_ISS_dissipation_inequality}
implies
\[
\dot V\le -\frac{c_*}{m_2}V+\gamma\|\Delta_d\|^2.
\]
Setting \(\lambda=c_*/m_2\) gives the standard ISS estimate $\dot V \le -\lambda V + \gamma\|\Delta_d\|^2$, and therefore the closed-loop tracking
error system is ISS with respect to \(\Delta_d\).
\end{proof}

\begin{corollary}
Under the assumptions of Theorem~1, let $\|\Delta_d\|_\infty
:=
\sup_{t\ge 0}\|\Delta_d(t)\|$. If \(\|\Delta_d\|_\infty<\infty\), then the tracking error is uniformly
ultimately bounded. More precisely, there exists a constant \(c>0\) such
that
\[
\limsup_{t\to\infty}\|e(t)\|
\le
c\,\|\Delta_d\|_\infty .
\]
In particular, if \(\Delta_d\equiv 0\), then the tracking errors converge
asymptotically to the origin. Moreover, the ultimate bound decreases as translational and attitude damping gains increase.
\end{corollary}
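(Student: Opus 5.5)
The corollary follows from the ISS estimate obtained at the end of the proof of Theorem~1, namely
\[
\dot V\le -\lambda V+\gamma\|\Delta_d\|^2,
\]
combined with the local quadratic sandwich $m_1\|e\|^2\le V(e)\le m_2\|e\|^2$. These two facts hold on the neighborhood where the attitude assumption is valid.

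\textbf{Step 1: bound $V$ along trajectories.} With $\|\Delta_d\|_\infty<\infty$, the comparison lemma applied to the scalar inequality gives
\[
V(t)\le e^{-\lambda t}V(0)+\frac{\gamma}{\lambda}\bigl(1-e^{-\lambda t}\bigr)\|\Delta_d\|_\infty^2 .
\]
Hence $V(t)\le \max\{V(0),\,\gamma\|\Delta_d\|_\infty^2/\lambda\}$ for all $t\ge 0$, and
\[
\limsup_{t\to\infty} V(t)\le \frac{\gamma}{\lambda}\|\Delta_d\|_\infty^2 .
\]

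\textbf{Step 2: pass to $\|e\|$ and take square roots.} Using $\|e\|^2\le V/m_1$ we obtain
\[
\limsup_{t\to\infty}\|e(t)\|\le c\,\|\Delta_d\|_\infty,\qquad c=\sqrt{\frac{\gamma}{m_1\lambda}}=\sqrt{\frac{\gamma m_2}{m_1c_*}} .
\]
Uniform ultimate boundedness follows because the convergence rate $e^{-\lambda t}$ is independent of the initial time. For any $\epsilon>0$, the time needed to enter the ball of radius $c\|\Delta_d\|_\infty+\epsilon$ depends only on the bound on $V(0)$. Setting $\Delta_d\equiv0$ gives $V(t)\le e^{-\lambda t}V(0)$, which is exponential, hence asymptotic, convergence of $e$ to the origin.

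\textbf{Step 3: forward invariance (main obstacle).} All of the above is local, so the trajectory must remain in the neighborhood $\mathcal N$ where the sandwich bounds and the attitude estimates ($\|\dot e_{R_i}\|\le c_i\|e_{\Omega_i}\|$ and positive definiteness of the trace term) hold. I would pick a sublevel set $\{V\le \rho\}\subset\mathcal N$. I would then require that the initial errors and the disturbance be small, with $V(0)\le\rho$ and $\gamma\|\Delta_d\|_\infty^2/\lambda<\rho$. On the boundary $V=\rho$ this gives $\dot V\le -\lambda\rho+\gamma\|\Delta_d\|_\infty^2<0$, so the sublevel set is forward invariant. This smallness condition is implicit in the assumptions of Theorem~1 and should be stated explicitly.

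\textbf{Step 4: dependence on gains.} The ultimate bound is governed by $\gamma/c_*$. In the translational Young step, $\gamma_t$ can be taken of order $1/\lambda_{\min}(K_v)$ while $\bar a_v$ grows like $\lambda_{\min}(K_v)$. Tracking these constants shows that $c$ decreases as the damping gains $K_v$ and $k_\Omega$ increase, provided $\alpha$ and $\beta_i$ are rescaled accordingly. This last claim is qualitative and I would justify it only at the level of these constant estimates.
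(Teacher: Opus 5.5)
\textbf{Verdict: Steps~1--3 are correct, but Step~4 does not prove the gain-monotonicity claim, and the paper's proof does not either.}

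Your Steps~1--2 reproduce the paper's proof: the comparison lemma for $\dot V\le-\lambda V+\gamma\|\Delta_d\|_\infty^2$, then $m_1\|e\|^2\le V$ and $c=\sqrt{\gamma/(m_1\lambda)}$, with $V(t)\le e^{-\lambda t}V(0)$ when $\Delta_d\equiv0$. Step~3 makes explicit the forward invariance that the paper takes for granted by working ``in the considered neighborhood''. Your smallness condition on $\Delta_d$ there is actually unnecessary. In the closed loop of Theorem~1 the disturbance enters only the translational subsystem, and $V_t$ is a global quadratic form. So invariance is needed only for the decoupled attitude errors, and sublevel sets of $\frac12 e_{\Omega_i}^\top J_i e_{\Omega_i}+\frac{k_R}{2}\operatorname{tr}(I-(R_i^d)^\top R_i)$, whose derivative is $-k_\Omega\|e_{\Omega_i}\|^2$, provide it.

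The gap is Step~4: tracking the constants does not give a bound that decreases with the damping gains. Your bound is $c^2=\gamma m_2/(m_1c_*)$ with $c_*=\min\{c_p,c_v,c_R,c_\Omega\}$. The rescaling of $\alpha,\beta_i$ you invoke shrinks precisely the constants that then realize this minimum, so the growth $\bar a_v\sim k_v$ is irrelevant.

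For $K_p=k_pI$, $K_v=k_vI$, negative definiteness of the nominal translational form forces $\alpha<4m_rk_pk_v/(k_v^2+4m_rk_p)<4m_rk_p/k_v$. Setting $e_v=0$ then gives $c_p\le\alpha k_p/m_r<4k_p^2/k_v$, so the improvement $\gamma_t=O(1/k_v)$ is offset by $c_p=O(1/k_v)$. In fact, take a constant $\Delta_d$ and evaluate the dissipation inequality at the equilibrium it produces ($e_p=K_p^{-1}\Delta_d$, $e_v=0$, zero attitude error). This gives $\gamma\ge c_p/\lambda_{\min}(K_p)^2$, hence $c\ge1/\lambda_{\min}(K_p)$ for every $K_v$. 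This matches the fact that for constant $\Delta_d$ one has $\|e(t)\|\to\|K_p^{-1}\Delta_d\|$, independently of both damping gains.

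For the attitude damping the estimate goes the wrong way. Taking $e_{R_i}=-\frac{k_\Omega}{2k_R}e_{\Omega_i}$ near the desired attitude shows that $\beta_i<4k_R/k_\Omega$ is necessary, so $c_R\le\min_i\beta_ik_R<4k_R^2/k_\Omega$. Meanwhile $\gamma$ does not depend on $k_\Omega$, since the disturbance never enters the attitude equations, and $m_2\ge m_1$. Thus $c>\sqrt{\gamma k_\Omega}/(2k_R)$, which blows up as $k_\Omega\to\infty$.

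The paper's proof has the same hole. It merely asserts that $\lambda$ increases with the damping gains, which $\lambda=c_*/m_2\le\min\{c_p,c_R\}/m_2$ contradicts for large gains. A defensible version applies the comparison argument to $V_t$ alone. Since the attitude errors converge to zero for every $\Delta_d$, the attitude gains then drop out of the ultimate bound. Any $K_v$-dependence should be stated only in a weakened, regime-restricted form.
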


\begin{proof}
From Theorem~1, there exist positive constants
\(m_1,m_2,\lambda,\gamma\) such that, in the considered neighborhood,
$m_1\|e\|^2\le V(e)\le m_2\|e\|^2$ and $\dot V
\le
-\lambda V+\gamma\|\Delta_d\|^2$. If \(\Delta_d\equiv 0\), then
$\dot V\le -\lambda V$, and therefore \(V(t)\le V(0)e^{-\lambda t}\). Since \(V\) is locally
equivalent to \(\|e\|^2\), the tracking errors converge exponentially,
and hence asymptotically, to the origin.

If \(\|\Delta_d\|_\infty<\infty\), then $\dot V
\le
-\lambda V+\gamma\|\Delta_d\|_\infty^2$. Solving this scalar differential inequality yields
\[
V(t)
\le
e^{-\lambda t}V(0)
+
\frac{\gamma}{\lambda}
\left(1-e^{-\lambda t}\right)
\|\Delta_d\|_\infty^2 .
\]
Taking the limit superior as \(t\to\infty\) gives
\[
\limsup_{t\to\infty}V(t)
\le
\frac{\gamma}{\lambda}\|\Delta_d\|_\infty^2 .
\]
Using \(m_1\|e(t)\|^2\le V(t)\), we obtain
\[
\limsup_{t\to\infty}\|e(t)\|
\le
\sqrt{\frac{\gamma}{m_1\lambda}}\,
\|\Delta_d\|_\infty .
\]
Thus the result holds with $c=\sqrt{\frac{\gamma}{m_1\lambda}}$. Since \(\lambda\) increases with the
translational and attitude damping gains, the ultimate bound decreases
as these gains increase.
\end{proof}

\subsection{Numerical example}

Next we illustrate the theoretical results through the closed-loop error
dynamics used in the proof of Theorem~1. We consider a constant desired
relative position
\(p_{12}^d=\begin{bmatrix} d_0 & 0 & 0 \end{bmatrix}^\top\),
with \(0<d_0<l\), so that the desired catenary configuration is feasible.
UAV~1 is stabilized at a fixed point \(p_{1,0}\), while UAV~2 tracks the
induced reference \(p_2^d(t)=p_{1,0}+p_{12}^d\). The reduced mass is set to
\(m_r=1\), the relative position gain is \(K_p=6I_3\), and the nominal
damping gains are \(K_v=7I_3\) and \(k_\Omega=6\). The attitude gains and
inertia matrices are chosen as \(k_R=8\),
\(J_1=J_2=\operatorname{diag}(0.08,0.08,0.12)\). The small Lyapunov cross
terms are taken as \(\alpha=0.08\) and \(\beta_1=\beta_2=0.02\). The initial
condition is selected in a neighborhood of the desired configuration.

To test robustness, we introduce a bounded perturbation in the relative
translational dynamics of the form
\[
\Delta_d(t)=\bar\Delta
\begin{bmatrix}
\sin(\omega_1 t) \\
0.5\cos(\omega_2 t) \\
0.3\sin(\omega_3 t)
\end{bmatrix},
\]
with \(\omega_1=1.3\), \(\omega_2=0.9\), and \(\omega_3=1.7\). The scalar
\(\bar\Delta>0\) determines the disturbance amplitude. In
Fig.~\ref{fig:theorem1}, we use \(\bar\Delta=0.35\) for the perturbed case,
while the nominal case corresponds to \(\bar\Delta=0\).

Fig.~\ref{fig:theorem1} illustrates Theorem~1. In the nominal case, the
tracking error converges to zero. Under bounded perturbations, the error
remains ultimately bounded and converges to a small neighborhood of the
origin, consistently with the ISS estimate.

Fig.~\ref{fig:corollary1} illustrates Corollary~1. The left panel shows
that the ultimate tracking error grows approximately linearly with
\(\|\Delta_d\|_\infty\), comparing low damping
\((K_v=4I_3,k_\Omega=3.5)\) and high damping
\((K_v=10I_3,k_\Omega=8)\). The right panel fixes \(\bar\Delta=0.40\) and
shows that increasing the translational damping gain \(K_v=k_vI_3\)
reduces the ultimate tracking error.

\begin{figure}[h!]
\centering
\includegraphics[width=\linewidth]{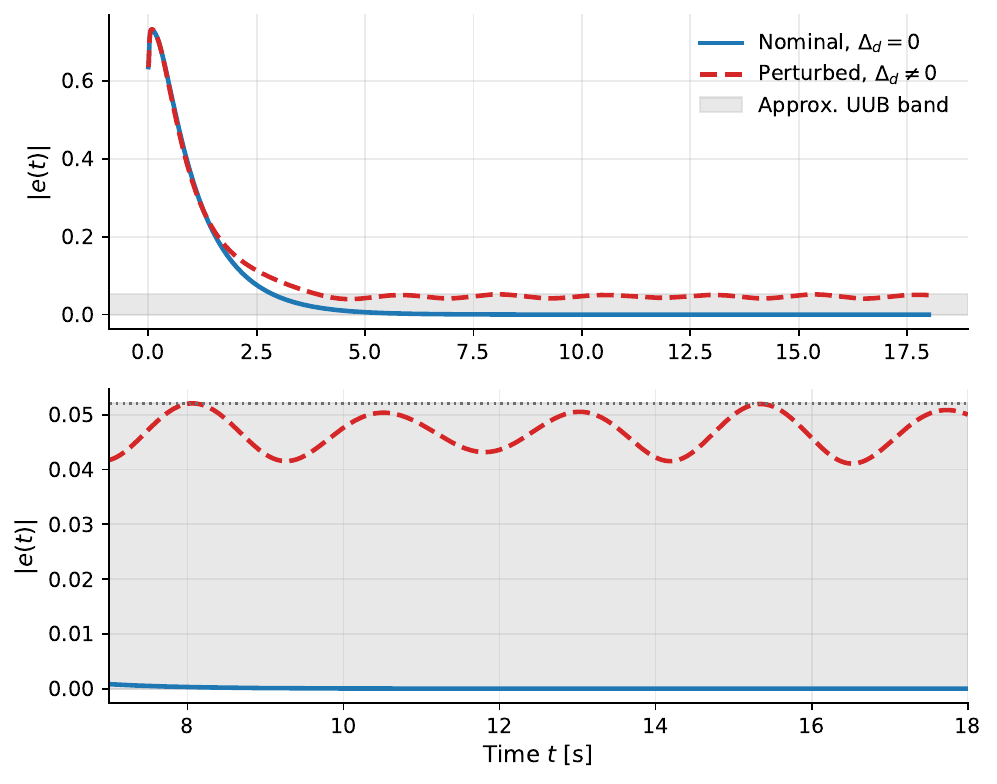}
\caption{The nominal error converges to zero, while bounded perturbations produce an ultimately bounded response. The gray band indicates the numerical tail bound.}
\label{fig:theorem1}
\end{figure}

\begin{figure}[h!]
\centering
\includegraphics[width=\linewidth]{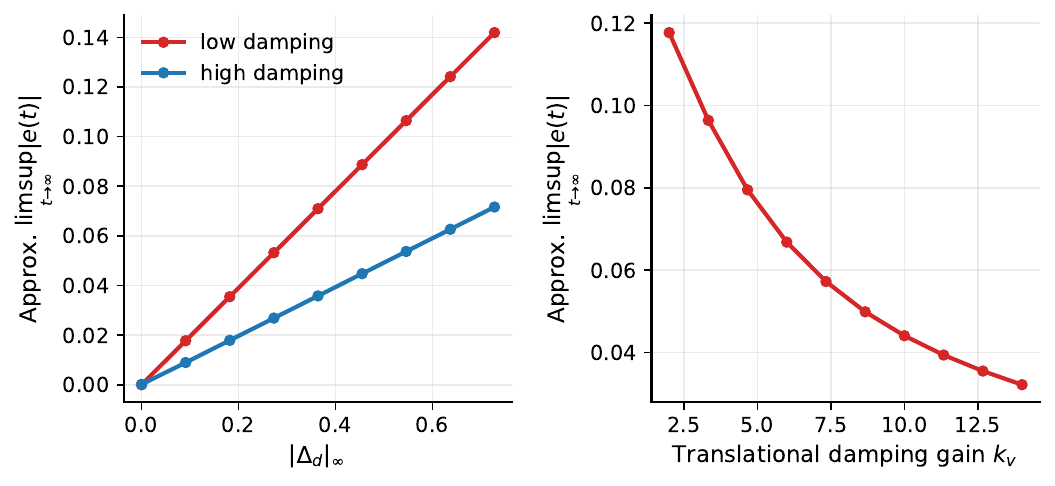}
\caption{Left: ultimate error versus \(\|\Delta_d\|_\infty\) for low and high damping. Right: ultimate error versus \(k_v\) for \(K_v=k_vI_3\) and fixed disturbance magnitude.}
\label{fig:corollary1}
\end{figure}

\section{Conclusions}

We developed a robust geometric control framework for catenary robots on
\(SE(3)\), treating the cable as a configuration-induced geometric subsystem.
The proposed controller compensates nominal catenary forces and achieves
local input-to-state stability with respect to unstructured force
uncertainties, with an explicit ultimate bound for bounded perturbations.
Future work will extend the model to dynamic and elastic cables, contact-aware
tasks, and experimental validation on multi-UAV platforms.

%\FloatBarrier
\bibliographystyle{IEEEtran}
\bibliography{cluster}

\end{document}